\newtheorem{theorem}{Theorem}
\newtheorem{lemma}{Lemma}
\newtheorem{definition}{Definition}
\newenvironment{proof}[1][Proof]
    {\begin{trivlist}\item[\hskip \labelsep] \textit{#1. }}
    {$\square$ \end{trivlist}}
\newcommand{\R}{\mathbb{R}}
\newcommand{\C}{\mathbb{C}}
\newcommand{\N}{\mathbb{N}}
\renewcommand{\d}{\,\mathrm{d}} 
\renewcommand{\i}{\mathrm{i}}
\renewcommand{\H}{\mathcal{H}}
\newcommand{\Cont}{\mathcal{C}}
\newcommand{\X}{\EuScript{X}}
\newcommand{\V}{\EuScript{V}}
\newcommand{\U}{\EuScript{U}}
\newcommand{\lilo}{o}
\newcommand{\id}{\mathrm{id}}
\newcommand{\mtext}[1]{\quad\mathrm{#1}\quad}
\newcommand{\q}[1]{``#1''}
\DeclareRobustCommand{\onehalf}{\textstyle \frac{1}{2}}
\renewcommand*{\@opargbegintheorem}[3]{\trivlist
  \item[\hskip \labelsep{\bfseries #1\ #2}] \textbf{(#3)}\ \itshape}
\newsavebox{\@brx}
\newcommand{\llangle}[1][]{\savebox{\@brx}{\(\m@th{#1\langle}\)}\mathopen{\copy\@brx\kern-0.5\wd\@brx\usebox{\@brx}}}
\newcommand{\rrangle}[1][]{\savebox{\@brx}{\(\m@th{#1\rangle}\)}\mathclose{\copy\@brx\kern-0.5\wd\@brx\usebox{\@brx}}}
\def\mathclap#1{\mathrm{\hbox to 0pt{\hss$\mathsurround=0pt#1$\hss}}}
\begin{document}
\title{Functional differentiability in time-dependent quantum mechanics}
\author{Markus Penz}
\email{markus.penz@uibk.ac.at}
\author{Michael Ruggenthaler}
\email{michael.ruggenthaler@uibk.ac.at}
\affiliation{Institut für Theoretische Physik, Universität Innsbruck, 6020 Innsbruck, Austria}

\begin{abstract}
In this work we investigate the functional differentiability of the time-dependent many-body wave function and of derived quantities with respect to time-dependent potentials. For properly chosen Banach spaces of potentials and wave functions Fréchet differentiability is proven. From this follows an estimate for the difference of two solutions to the time-dependent Schr\"odinger equation that evolve under the influence of different potentials. Such results can be applied directly to the one-particle density and to bounded operators, and present a rigorous formulation of non-equilibrium linear-response theory where the usual Lehmann representation of the linear-response kernel is not valid. Further, the Fréchet differentiability of the wave function provides a new route towards proving basic properties of time-dependent density-functional theory.
\end{abstract}
\pacs{02.30.Xx, 03.65.Db, 31.15.ee}

\maketitle

\section{Introduction}

Time-dependent quantum mechanics allows us to predict the dynamics of multi-electron systems, such as molecules and atoms. By solving the time-dependent Schr\"odinger equation we determine the interacting many-electron wave function, which in turn determines all physical observables. Since a straightforward solution of the interacting Schr\"odinger equation for realistic quantum systems is usually not feasible numerically, one is bound to perform approximate calculations in computa\-tional chemistry. A prerequisite that such approximation schemes allow reasonable predictions is that already the dynamics of the original fully interacting multi-electron problem are not too sensitive to slight differences in the external field or inter\-action. For instance, the dynamics of a molecule subject to an external laser pulse is usually determined by adopting a dipole approximation instead of a coupling to the full vector potential of the laser pulse. In other words, small changes or inaccuracies in the external potential that drive a quantum system should not have huge effects on the dynamics of that system. It would thus be desirable to give an estimate on how much two solutions differ with respect to the difference of their respective external potentials and interactions. An example for such different interaction terms stems from QED. In Coulomb gauge one can see that any interaction between particles due to a longitudinal charge current is described by the Coulomb interaction, while any interaction due to a transversal current is mediated via the (transversal) photons \cite{greiner}. Hence the interaction between the electrons differs whether we have a transversal current in the quantum system or not \cite{breit}. Provided we can show that the solution to the time-dependent Schr\"odinger equation has a smooth linear response due to variations in the external potential or interaction, which in the framework of Banach spaces presented here is a so called Fréchet derivative, we can derive such an estimate with the help of the fundamental theorem of calculus on Banach spaces. Hence we are interested in the functional differentiability of the time-dependent wave function with respect to time-dependent perturbations, i.e., non-equilibrium linear-response theory. We point out that we consider explicitly time-dependent systems, such that the usual linear-response formalism in terms of the Lehmann representation \cite{GF, stefanucci-vanleeuwen} is in general not applicable. This has to do with the fact that the evolution operator of an explicitly time-dependent Hamiltonian cannot be expressed by a simple spectral representation.

Also for approximation schemes used in quantum mechanics, functional differentiability is of central importance. For instance, in density-functional theory (DFT) \cite{DFT, DFT2} and the associated Kohn-Sham construction, functional differentiability of certain energy functionals with respect to the one-particle density is vital \cite{lieb, chayes, eschrig, lammert1, lammert2, kvaal}. Functional differentiability in the time-dependent extension of DFT, so-called time-dependent density-functional theory (TDDFT) \cite{TDDFT, TDDFT2}, is employed to extend the validity of the theory beyond its usual standard formulation in terms of Taylor-expandable external potentials \cite{runge} and to avoid the very strong restriction of Taylor-expandable densities in the construction of a Kohn-Sham system \cite{vanleeuwen, review}. This is achieved by using density-response theory \cite{vanleeuwen2} or  directly the functional differentiability of an observable that describes the internal-force density \cite{ruggenthaler1, ruggenthaler2}. In the first case the considerations are restricted to the perturbation of time-independent problems. To use a similar approach for explicitly time-dependent systems one needs a well-defined non-equilibrium density-response theory. In the second approach Fréchet differentiability of the internal-force density is already assumed and still needs to be justified mathematically. Therefore, functional differentiability of the wave function is of central importance also within TDDFT. \\ 

In this work we will present such rigorous results on the functional differentiability of solutions to the time-dependent Schr\"odinger equation and provide similar results for the density and bounded operators. Since the Schr\"odinger Hamiltonian does not depend on spin we disregard the spin degrees in the many-electron wave function in this work. How the spin-dependent wave function can be constructed from the spin-independent ones is discussed in detail in the books of McWeeny\cite{mcweeny} and Wilson\cite{wilson}. We therefore consider the Cauchy problem for the time-dependent Schrödinger equation (TDSE) on the Hilbert space $\H = L^2(\R^n, \C)$ of quantum states equipped with  the scalar product $\langle \cdot,\cdot \rangle$. The dimensionality of the configuration space is usually $n=3N$ with $N$ the number of particles and we only consider a finite time interval $[0,T]$ and a time-dependent, possibly unbounded scalar potential $v : [0,T] \times \R^n \rightarrow \R$. Note that $v$ can contain (possibly time-dependent $N$-body) interaction terms too. The Hamiltonian governing the dynamics of the system is then $H[v] = H_0 + v = -\Delta + v$, where $\Delta$ is the $n$-dimensional Laplacian, and the Cauchy problem is stated as follows.
\begin{equation}\label{cauchy-problem}
\left.\begin{array}{l}
    \i \partial_t \psi = H[v]\, \psi \\
    \psi(0) = \psi_0 \in \H
\end{array}\right\}
\end{equation}

However, if we want to find a unique solution to the above Cauchy problem, not every wave function is a valid initial state, since the Schr\"odinger Hamiltonian $H[v]$ is an unbounded, self-adjoint operator on $\H$ (and hence is not defined on all of $\H$ by the Hellinger-Toeplitz theorem). To overcome this restriction and to allow for every $\psi_0 \in \H$ we derive a generalized version of \eqref{cauchy-problem}. We introduce the $H_0$-interaction picture as the unitary transformation $\psi(t) = U_0(t)\hat\psi(t)$ with the one-parameter unitary group $U_0(t) = \exp(-\i t H_0) = \exp(\i t \Delta)$. Put into \eqref{cauchy-problem} and integrated over time this leaves us with an integral equation of Volterra type that when transformed back we call the ``mild'' Schrödinger equation.
\begin{align}\label{schroequ-mild}
\begin{split}
\psi([v],t) &= U([v],t,0)\psi_0 \\
&= U_0(t)\psi_0 - \i\int_0^t U_0(t-s) v(s) \psi([v],s) \d s
\end{split}
\end{align}
The unique solution to this problem for given $v$ and $\psi_0$, if existent, we call the quantum trajectory $\psi[v] : [0,T] \times \R^n \rightarrow \C$ in the Schrödinger picture. We follow closely the work of Yajima \cite{yajima} for existence and uniqueness of such ``mild'' solutions to \eqref{schroequ-mild} considering a Banach space $\V$ of possible potentials. We then consider how a slight variations of $v$ in the direction $w$ would alter this result, i.e. forming the directional derivative
\begin{equation}\label{gateaux-deriv}
\delta\psi[v;w] = \lim_{\lambda \rightarrow 0} \frac{1}{\lambda} \big(\psi[v+\lambda w]-\psi[v]\big)
\end{equation}
where the limit is taken with respect to the norm of an appropriate Banach space $\X$ of quantum trajectories. If this yields a linear, bounded map $\delta\psi[v;\cdot] \in \mathcal{B}(\V, \X)$ it is called the Gâteaux derivative at point $v$. If it is further defined at all points $v$ of an open subset $\U \subset \V$ and $v \mapsto \delta\psi[v;\cdot]$ is a continuous mapping $\U \rightarrow \mathcal{B}(\V,\X)$ then the Gâteaux derivative equals the Fréchet derivative and we write $\psi \in \Cont^1(\U,\X)$. In the Fréchet case the limit in \eqref{gateaux-deriv} holds uniformly for every kind of path within $\U$ towards zero.\\

All the quantum trajectories we consider in this work are within the set $\Cont^0([0,T],\H)$ of continuous maps from the time interval to the Hilbert space of quantum states. It will be equipped with the norm $\|\varphi\|_{2,\infty} = \sup_{t\in[0,T]}\|\varphi(t)\|_2$ to make it a Banach space (it is closed because the supremum norm implies uniform convergence on the compact time interval). The main result of this work is now to show that the definitions of $\V$ and $\X$ following Yajima \cite{yajima} are sufficient to make $\psi[v]$ Fréchet differentiable. To show this we have to carefully investigate Schrödinger dynamics subject to time-dependent potentials. To this end and in order to be self-contained we first start by revisiting the mathematical approach to the TDSE developed by Yajima. The important special case of singular Coulombic potentials is discussed in a separate section. After the proof of the main theorem we present its application to linear response theory and TDDFT by deriving estimates for the variation of expectation values of bounded self-adjoint operators and of the one-particle density. While the first result leads to the well-known non-equilibrium version of Kubo's formula, the second application puts non-equilibrium density-response theory \cite{stefanucci-vanleeuwen} on rigorous grounds. Finally we discuss implications of our results for TDDFT.

\section{Review of Schrödinger dynamics with time-dependent potentials}

We start by giving a review of uniqueness and existence results for the TDSE based on the work of Yajima \cite{yajima}. To this end we first define Lebesgue-spaces $L^{q,\theta}$ over $[0,T] \times \R^n$ as the set of functions with finite norm
\[
\|\varphi\|_{q,\theta} = \left( \int_0^T \left( \int_{\R^n} |\varphi(t,x)|^q \d x \right)^{\theta/q} \d t \right)^{1/\theta} < \infty
\]
modulo the null set $\{ \varphi : \|\varphi\|_{q,\theta}=0 \}$. The first superscript $q$ denotes the $L^q$ space in spatial coordinates and $\theta$ the $L^\theta$ space over the (finite) time interval. Latin characters are always used for the space part and Greek ones for time. The special cases $q$ or $\theta = \infty$ are possible and defined in the usual way with the supremum (uniform) norm in time and the essential supremum norm in space. Such Lebesgue-spaces are the building blocks of the Banach spaces of quantum trajectories and corresponding potentials.

\begin{definition}[Banach spaces of quantum trajectories]\label{def-X}
Let the principal indices for the Banach space $\X$ be $2 \leq q \leq \infty$ and $2 < \theta \leq \infty$ with their dual exponents $q' = q/(q-1), \theta' = \theta/(\theta-1)$ and therefore fulfilling $1 \leq q' \leq 2$ and $1 \leq \theta' < 2$ as well as the typical Hölder relations $1/q + 1/q' = 1$ and $1/\theta + 1/\theta' = 1$. The assumed relation between those indices is $2/\theta = n(1/2 - 1/q)$ which implies $q < 2n/(n-2)$ for $n \geq 3$. We define $\X$ and its topological dual $\X'$ by
\begin{align*}
\X &= \Cont^0([0,T], \H) \cap L^{q,\theta}, \\
\X' &= L^{2,1} + L^{q',\theta'}.
\end{align*}
\end{definition}

The special relation between the exponents $q,\theta$ of this Banach space is called \emph{Schrödinger-admissible}. In the paper by D'Ancona et al.\cite{dancona} the scope is widened slightly to $\theta \geq 2$ with the choice $(n,\theta,q) = (2,2,\infty)$ ruled out. The norms of $\X$ and $\X'$ are
\begin{align*}
\|\varphi\|_\X =& \|\varphi\|_{2,\infty} + \|\varphi\|_{q,\theta}, \\
\|\varphi\|_{\X'} =& \inf\{ \|\varphi_1\|_{2,1} + \|\varphi_2\|_{q',\theta'} : \varphi_1 \in L^{2,1}, \varphi_2 \in L^{q',\theta'}, \\& \varphi=\varphi_1+\varphi_2 \}.
\end{align*}
Those two spaces are related as duals by the space-time scalar product $\llangle\cdot,\cdot\rrangle$ defined by
\[
\llangle\varphi,\psi\rrangle = \int_0^T \int_{\R^n} \overline{\varphi}(t,x) \psi(t,x) \d x \d t.
\]
Firstly, we consider freely evolving trajectories given by application of the free evolution operator belonging to the Cauchy problem \eqref{cauchy-problem} with $v=0$, i.e., a mapping from initial states to trajectories.
\begin{align}\label{free-evolution}
\begin{split}
U_0: \H &\longrightarrow \X \\
\psi_0 &\longmapsto U_0\psi_0 = (t \mapsto U_0(t)\psi_0).
\end{split}
\end{align}
To guarantee them lying within the space $\X$ we rely on an inequality in its original form due to Strichartz \cite{strichartz} considering the wave equation but a version for solutions to the free Schrödinger equation is also available\cite{ginibre-velo,yajima2}.

\begin{theorem}[Strichartz inequality]\label{strichartz-ineq}
Let the exponents $q,\theta$ be like in Definition \ref{def-X} then there exists a constant $C_0$ such that for every $\psi_0 \in \H$ it holds
\[ \| U_0 \psi_0 \|_{q,\theta} \leq C_0 \|\psi_0\|_2. \]
\end{theorem}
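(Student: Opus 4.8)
The plan is to establish the inequality by the classical $TT^*$ duality method, reducing the mixed-norm bound to a pointwise-in-time dispersive estimate combined with a one-dimensional Hardy--Littlewood--Sobolev inequality. First I would regard $U_0$ as a linear operator $T\colon \H \to L^{q,\theta}$, $T\psi_0 = (t\mapsto U_0(t)\psi_0)$, extending the time integration to all of $\R$ (the bound on $[0,T]$ then follows by restriction, which only decreases the $L^{q,\theta}$ norm). By the abstract duality lemma, $T$ is bounded with $\|T\| = \|TT^*\|^{1/2}$, where the adjoint is taken with respect to $\langle\cdot,\cdot\rangle$ on $\H$ and the space-time pairing $\llangle\cdot,\cdot\rrangle$ on $L^{q,\theta}$, whose dual is $L^{q',\theta'}$. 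Using $U_0(s)^* = U_0(-s)$ a direct computation gives $T^*F = \int_\R U_0(-s)F(s)\d s$ and hence the self-adjoint composition
\[
TT^*F(t) = \int_\R U_0(t-s)F(s)\d s,
\]
so it suffices to bound this operator from $L^{q',\theta'}$ to $L^{q,\theta}$.

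The spatial input is the dispersive estimate for the free propagator. From the explicit Gaussian kernel of $U_0(t) = \exp(\i t\Delta)$ one reads off $\|U_0(t)\varphi\|_\infty \le C|t|^{-n/2}\|\varphi\|_1$, while unitarity gives $\|U_0(t)\varphi\|_2 = \|\varphi\|_2$. Riesz--Thorin interpolation between these two endpoints yields, for every admissible $q$ in the range $2 \le q \le \infty$,
\[
\|U_0(t)\varphi\|_q \le C|t|^{-n(\onehalf - 1/q)}\|\varphi\|_{q'}.
\]
The crucial point is that the Schr\"odinger-admissibility relation $2/\theta = n(\onehalf - 1/q)$ identifies the decay exponent with $2/\theta$, turning the kernel of $TT^*$ into a time convolution against $|t|^{-2/\theta}$. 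Applying this estimate inside the $t$-integral together with Minkowski's inequality I would bound
\[
\|TT^*F(t)\|_q \le C\int_\R |t-s|^{-2/\theta}\|F(s)\|_{q'}\d s,
\]
and then take the $L^\theta$ norm in time. Writing $g(s) = \|F(s)\|_{q'}$, the right-hand side is a one-dimensional fractional integral of $g$, and the one-dimensional Hardy--Littlewood--Sobolev inequality maps $L^{\theta'}$ to $L^\theta$ precisely when $1 + 1/\theta = 2/\theta + 1/\theta'$, which holds automatically from the H\"older relation $1/\theta + 1/\theta' = 1$. This produces $\|TT^*F\|_{q,\theta} \le C^2\|F\|_{q',\theta'}$ and hence the claim with $C_0 = C$.

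I expect the only genuine obstacle to be the role of the strict inequality $2 < \theta$ imposed in Definition \ref{def-X}. The Hardy--Littlewood--Sobolev inequality in the above form requires the convolution exponent $2/\theta$ to lie strictly between $0$ and $1$; the condition $\theta > 2$ is exactly what keeps $2/\theta < 1$ and places us away from the endpoint $\theta = 2$, where this elementary convolution bound fails and the far more delicate Keel--Tao argument would be required. The degenerate endpoint $q = 2$, $\theta = \infty$ is separately trivial, following directly from unitarity as $\|U_0\psi_0\|_{2,\infty} = \|\psi_0\|_2$. Thus the admissibility constraints are not merely convenient but are precisely the hypotheses under which the argument closes, and I would take care to flag where $\theta > 2$ (equivalently $q < 2n/(n-2)$ for $n\ge 3$) enters.
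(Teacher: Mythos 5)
Your proposal is correct, but it is worth noting that the paper itself offers no proof of this theorem at all: it states the inequality and refers to the literature (Strichartz for the original wave-equation version, Ginibre--Velo and Yajima for the Schr\"odinger version), treating it as a known black-box input to the later arguments. What you have written out is essentially the standard non-endpoint proof contained in those references: the $TT^*$ reduction, the dispersive bound $\|U_0(t)\varphi\|_\infty \leq C|t|^{-n/2}\|\varphi\|_1$ from the explicit free propagator kernel interpolated against unitarity, and the one-dimensional Hardy--Littlewood--Sobolev inequality closing the time integration. Your accounting of where the hypotheses of Definition \ref{def-X} enter is accurate and is the most valuable part of the write-up: Schr\"odinger-admissibility $2/\theta = n(1/2-1/q)$ is exactly what converts the dispersive decay into the convolution kernel $|t-s|^{-2/\theta}$, the strict inequality $\theta > 2$ is what keeps that kernel exponent in the open interval $(0,1)$ where elementary HLS applies (avoiding the Keel--Tao endpoint), and the degenerate pair $q=2$, $\theta=\infty$ is indeed trivial by unitarity and must be split off since the duality $(L^{q',\theta'})' = L^{q,\theta}$ and the HLS step both degenerate there. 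Two small points of bookkeeping: the final constant is not literally the dispersive constant $C$ but also absorbs the Hardy--Littlewood--Sobolev constant, so one should write $C_0^2 = \|TT^*\|$ rather than $C_0 = C$; and the $TT^*$ lemma should formally be applied on a dense class of test functions $F$ (say simple functions, or smooth compactly supported ones) before passing to the norm, particularly in the case $q=\infty$ allowed for $n=1$, where the pairing characterization $\|g\|_\infty = \sup_{\|f\|_1\leq 1}|\int fg|$ is what keeps the duality argument valid. Neither affects the correctness of the argument.
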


Combined with the unitarity of $U_0(t)$ we easily get the desired inequality that shows the stability of the free evolution within $\X$.
\begin{equation}\label{free-evol-inequ}
\| U_0 \psi_0 \|_\X \leq (1+C_0) \|\psi_0\|_2.
\end{equation}
The spaces of trajectories will now be accompanied by the corresponding Banach spaces for potentials that guarantee stability in $\X$ also for non-free evolution operators.

\begin{definition}[Banach spaces of potentials]\label{def-V}
Related to $\X$ we define $\V$ demanding of its indices $p \geq 1,\alpha \geq 1,\beta > 1$ that $0 \leq 1/\alpha < 1-2/\theta$ and $1/p = 1-2/q$.
\[
\V = L^{p,\alpha} + L^{\infty,\beta}
\]
\end{definition}

The condition on $p$ actually guarantees finite potential energy at almost all times for $v(t) \in L^p$ and a state $\psi(t) \in L^q$ thus $n(t) \in L^{q/2}$ because $1/p + 2/q = 1$ means $v(t)n(t) \in L^1$. Note that because of the condition on $q$ in Definition \ref{def-X} this set of inequalities demands $p > \frac{n}{2}$ for $n \geq 3$ therefore demanding $p \rightarrow \infty$ for very large particle numbers which rules out Coulombic singular potentials as will be discussed in Section \ref{sect-coulomb}. Further we note that the Banach space of potentials $\V$ includes potentials depending on more than two particle coordinates and potentials without special symmetry conditions thus acting differently on different particles (destroying any assumed Bose or Fermi symmetry).
\\

The spaces $\X, \X'$, and $\V$ are linked in the following lemma, a slightly generalised form of the Hölder inequality, taken from Yajima \cite[Lemma 2.3]{yajima}.

\begin{lemma}\label{lemma-mult-op}
A multiplication operator $v \in \V$ is a bounded operator $\X \rightarrow \X'$ and fulfils
\[
\|v \varphi\|_{\X'} \leq T^* \|v\|_\V \|\varphi\|_\X
\]
with
\begin{equation}\label{Tstar}
T^* = \max\{ T^{1-1/\beta}, T^{1 - 2/\theta-1/\alpha} \}
\end{equation}
monotonously increasing in $T$.
\end{lemma}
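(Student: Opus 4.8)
The plan is to split $v$ according to the sum-space structure of $\V$ and match the two pieces to the two summands of $\X' = L^{2,1} + L^{q',\theta'}$. Write $v = v_1 + v_2$ with $v_1 \in L^{p,\alpha}$ and $v_2 \in L^{\infty,\beta}$, so that $v\varphi = v_1\varphi + v_2\varphi$. I would aim to show that $v_1\varphi \in L^{q',\theta'}$ and $v_2\varphi \in L^{2,1}$, with the respective norms controlled by $\|v_1\|_{p,\alpha}\|\varphi\|_\X$ and $\|v_2\|_{\infty,\beta}\|\varphi\|_\X$; the two prefactors that appear will turn out to be exactly the two candidates inside the maximum defining $T^*$.

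For the $v_1$-term I would first apply H\"older in the spatial variable at each fixed time $t$. Because the index condition $1/p = 1 - 2/q$ gives $1/p + 1/q = 1/q'$, one obtains $\|v_1(t)\varphi(t)\|_{q'} \le \|v_1(t)\|_p \|\varphi(t)\|_q$ pointwise in $t$. The genuinely delicate step is the subsequent H\"older estimate in time: I need the $L^{\theta'}$-norm (in $t$) of the product of a function in $L^\alpha$ and one in $L^\theta$. Here the strict inequality $1/\alpha < 1 - 2/\theta$ from Definition \ref{def-V} is essential, since it guarantees $1/\alpha + 1/\theta < 1 - 1/\theta = 1/\theta'$, leaving positive slack. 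I would exploit this slack through a three-factor H\"older inequality against the constant function $1$ on the finite interval $[0,T]$, with auxiliary exponent $\gamma$ fixed by $1/\gamma = 1/\theta' - 1/\alpha - 1/\theta = 1 - 2/\theta - 1/\alpha$. This produces the factor $\|1\|_\gamma = T^{1-2/\theta-1/\alpha}$ and hence $\|v_1\varphi\|_{q',\theta'} \le T^{1-2/\theta-1/\alpha}\|v_1\|_{p,\alpha}\|\varphi\|_{q,\theta}$.

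The $v_2$-term is more elementary. At fixed $t$, multiplication by $v_2(t) \in L^\infty$ maps $L^2 \to L^2$ with $\|v_2(t)\varphi(t)\|_2 \le \|v_2(t)\|_\infty\|\varphi(t)\|_2$, and here I would bound $\|\varphi(t)\|_2$ uniformly by $\|\varphi\|_{2,\infty} \le \|\varphi\|_\X$. Integrating over $t$ to form the $L^{2,1}$-norm leaves $\int_0^T \|v_2(t)\|_\infty \d t$, which a single H\"older step with the pair $\beta,\beta'$ bounds by $T^{1/\beta'}\|v_2\|_{\infty,\beta} = T^{1-1/\beta}\|v_2\|_{\infty,\beta}$. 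This yields $\|v_2\varphi\|_{2,1} \le T^{1-1/\beta}\|v_2\|_{\infty,\beta}\|\varphi\|_\X$.

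Finally I would assemble the pieces. Since $v\varphi = v_1\varphi + v_2\varphi$ with the two summands lying in $L^{q',\theta'}$ and $L^{2,1}$ respectively, the infimum definition of $\|\cdot\|_{\X'}$ gives $\|v\varphi\|_{\X'} \le \|v_2\varphi\|_{2,1} + \|v_1\varphi\|_{q',\theta'} \le T^*\big(\|v_1\|_{p,\alpha} + \|v_2\|_{\infty,\beta}\big)\|\varphi\|_\X$ with $T^* = \max\{T^{1-1/\beta}, T^{1-2/\theta-1/\alpha}\}$. Taking the infimum over all admissible decompositions $v = v_1 + v_2$ turns the bracket into $\|v\|_\V$ and closes the estimate. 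The asserted monotonicity of $T^*$ then follows because both exponents $1-1/\beta$ (using $\beta > 1$) and $1-2/\theta-1/\alpha$ (using $1/\alpha < 1 - 2/\theta$) are strictly positive, so each power of $T$ is increasing and so is their maximum. The main obstacle is precisely the temporal H\"older estimate for the $v_1$-term, where the strict index inequality must be invoked to create room for the finite-interval factor.
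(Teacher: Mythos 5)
Your proof is correct and follows essentially the same route as the paper: the same decomposition $v = v_1 + v_2$ matched to the two summands of $\X'$, the spatial H\"older step via $1/q' = 1/p + 1/q$, and the temporal estimate exploiting the slack $1/\alpha < 1 - 2/\theta$ on the finite interval, where your single three-factor H\"older in time is exactly the paper's two-step version (H\"older with exponents $\theta\theta'/(\theta-\theta')$ and $\theta$, followed by lowering the time index against the constant function $1$) performed in one stroke. The only minor difference is that you treat the infimum over decompositions of $v$ explicitly, a detail the paper leaves implicit.
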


\begin{proof}
We remember the partitioning $v = v_1 + v_2$ with $v_1 \in L^{p,\alpha}, v_2 \in L^{\infty,\beta}$ given by the norm of $\V$ and use Hölder's inequality for each part of $v\varphi = v_1\varphi + v_2\varphi$. To get the final result we need to change the time indices of the norms to bigger values which is possible with the simple relation (for arbitrary $m,\gamma,\rho$ and $\rho>\gamma$ using Hölder):
\begin{align*}
\|f\|_{m,\gamma} &= \|1 \cdot f\|_{m,\gamma} \\
&\leq \|1\|_{\infty,\gamma\rho/(\rho-\gamma)} \|f\|_{m,\rho} \\
&= T^{1/\gamma-1/\rho} \|f\|_{m,\rho}.
\end{align*}
For the $L^{q',\theta'}$-part of $\X$ we have with $1/q'-1/q=1/p$
\begin{align*}
\|v_1\varphi\|_{q',\theta'} &\leq \|v_1\|_{p,\theta\theta'/(\theta-\theta')} \|\varphi\|_{q,\theta} \\
&\leq T^{1-2/\theta-1/\alpha} \|v_1\|_{p,\alpha} \|\varphi\|_{q,\theta}
\end{align*}
and for the $L^{2,1}$-part 
\[
\|v_2\varphi\|_{2,1} \leq \|v_2\|_{\infty,1} \|\varphi\|_{2,\infty} \leq T^{1-1/\beta} \|v_2\|_{\infty,\beta} \|\varphi\|_{2,\infty}.
\]
The right hand side of the Lemma's statement clearly includes those two estimates which concludes the proof.
\end{proof}

We also adopt the definition of the trajectory map $Q$ from Yajima \cite[2.1]{yajima} but add the relevant potential $v$ as an index to the notation.
\begin{align}\label{def-Qv}
\begin{split}
Q_v : \X &\longrightarrow \X \\
\varphi &\longmapsto \left(t \mapsto - \i\int_0^t U_0(t-s) v(s) \varphi(s) \d s \right) 
\end{split}
\end{align}
Using Lemma \ref{lemma-mult-op} and dual-space tricks $Q_v$ is shown in \cite{yajima} to be bounded with operator norm $\|Q_v\| \leq C_Q T^* \|v\|_\V$ with a fixed constant $C_Q>0$. $Q_v\psi[v]$ is just the integral term in \eqref{schroequ-mild} and thus we can write the mild Schrödinger equation briefly as
\begin{equation}\label{schroequ-mild-2}
\psi[v] = U_0\psi_0+Q_v\psi[v].
\end{equation}
Inverting \eqref{schroequ-mild-2} yields a Neumann series which we can write as an equation to determine not $\psi([v],t)$ at a given instant but as a whole trajectory $\psi[v]:t \mapsto \psi([v],t)$ within $\X$.
\begin{equation}\label{neumann-series}
\psi[v] = (\id - Q_v)^{-1} U_0 \psi_0 = \sum_{k=0}^\infty Q_v^k U_0\psi_0
\end{equation}
This series actually converges if $T$ is short enough s.t.~$\|Q_v\| < 1$, which is always possible for fixed $v \in \V$. The uniqueness of solutions to the Schrödinger equation for longer time intervals is still guaranteed by a continuation procedure, taking $\psi([v],T)$ as a new initial value. This result can be used to define an evolution operator by $\psi([v],t) = U([v],t,s)\psi([v],s)$ with start time $s$ and end time $t$ which in Yajima \cite{yajima} is shown to fulfill the usual properties of an evolution system (we refer to the book of Pazy\cite{pazy} for a detailed definition). Note that $U([0],t,s) = U_0(t-s)$ is just the free evolution. Analogously to $U_0$ in \eqref{free-evolution} we define the evolution under a potential $v \in \V$ as a mapping $U[v]$ from initial states to trajectories.
\begin{align*}
U[v] : \H &\longrightarrow \X \\
\psi_0 &\longmapsto U[v]\psi_0 = (t \mapsto U([v],t,0)\psi_0)
\end{align*}

This result shows existence and uniqueness of solutions to the Schrödinger equation with a potential $v \in \V$. A more direct and thus simpler Strichartz-like estimate is due to D'Ancona et al.\cite{dancona}~and uses a fixed-point technique applied to a contraction derived from the implicit form of the mild Schrödinger equation \eqref{schroequ-mild-2}. Because we later refer to the estimate derived in its proof it will be given here.

\begin{theorem}\label{Cv-strichartz}
For arbitrary albeit finite $T>0$ and $v \in \V$ (in certain cases $T \rightarrow \infty$ becomes feasible) the solution to the mild Schrödinger equation yields the Strichartz estimate 
\[
\|\psi[v]\|_{q,\theta} \leq C_v \|\psi_0\|_2
\]
where $C_v=2 M^{1/\theta} (1+C_0)$. For the definition of $M(v) \in \N$ note the details in the beginning of the proof.
\end{theorem}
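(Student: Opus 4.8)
The plan is to localise the contraction behind the Neumann series \eqref{neumann-series} to short time slices and then reassemble the pieces using conservation of the $L^2$ norm. Since $\|Q_v\| \le C_Q T^* \|v\|_\V$ and, by Definition \ref{def-V}, both exponents appearing in \eqref{Tstar} are strictly positive (indeed $1-1/\beta>0$ because $\beta>1$, and $1-2/\theta-1/\alpha>0$ because $1/\alpha<1-2/\theta$), the quantity $T^*$ tends to $0$ as the length of the time interval shrinks. I would therefore first define $M=M(v)\in\N$ as the smallest integer for which $Q_v$ restricted to any subinterval of length $T/M$ has operator norm at most $\onehalf$; concretely $M$ is chosen so that $C_Q\max\{(T/M)^{1-1/\beta},(T/M)^{1-2/\theta-1/\alpha}\}\|v\|_\V\le\onehalf$, which is exactly the $M$ referred to in the statement. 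Then partition $[0,T]$ into the $M$ slices $I_j=[t_{j-1},t_j]$ of equal length $T/M$.

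Second, on each slice I would run the D'Ancona fixed-point argument for the map $\varphi\mapsto U_0(\cdot-t_{j-1})\psi([v],t_{j-1})+Q_v\varphi$ restricted to $I_j$, seeded with the endpoint value $\psi([v],t_{j-1})$ inherited from the previous slice. By construction this map is a contraction on $\X$ restricted to $I_j$ with constant $\onehalf$, so its unique fixed point---which is $\psi[v]$ on $I_j$ by uniqueness of the mild solution \eqref{schroequ-mild-2}---obeys, after absorbing the $Q_v$ term into the left-hand side,
\[
\|\psi[v]\|_{\X,I_j}\le 2\,\|U_0\psi([v],t_{j-1})\|_{\X,I_j}\le 2(1+C_0)\,\|\psi([v],t_{j-1})\|_2,
\]
where the last step is the stability inequality \eqref{free-evol-inequ} applied on $I_j$ (with Theorem \ref{strichartz-ineq} providing the uniform constant $C_0$). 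In particular the $L^{q,\theta}$ component satisfies $\|\psi[v]\|_{q,\theta,I_j}\le 2(1+C_0)\|\psi([v],t_{j-1})\|_2$.

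Third---and this is what makes the telescoping work---I would invoke unitarity of the evolution system $U([v],t,s)$ (the Hamiltonian $H[v]=-\Delta+v$ is self-adjoint for real $v$, so the $L^2$ norm is conserved), giving $\|\psi([v],t_{j-1})\|_2=\|\psi_0\|_2$ for every $j$. Without this conservation the per-slice factor $2(1+C_0)$ would compound to $[2(1+C_0)]^M$; it is precisely norm conservation that keeps the seed of each slice at $\|\psi_0\|_2$. Finally, for $\theta<\infty$ I would add the $\theta$-th powers of the local bounds,
\[
\|\psi[v]\|_{q,\theta}^\theta=\sum_{j=1}^M\|\psi[v]\|_{q,\theta,I_j}^\theta\le M\bigl(2(1+C_0)\|\psi_0\|_2\bigr)^\theta,
\]
and take the $\theta$-th root to reach $\|\psi[v]\|_{q,\theta}\le 2M^{1/\theta}(1+C_0)\|\psi_0\|_2=C_v\|\psi_0\|_2$; the case $\theta=\infty$ is the maximum over $j$, where $M^{1/\theta}=1$ and the same bound holds.

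I expect the main obstacle to be the bookkeeping of the localisation rather than any deep inequality: one must check that the norm of $Q_v$ computed over a subinterval is genuinely governed by \eqref{Tstar} evaluated at the subinterval length (so that a \emph{finite} $M$ indeed suffices), and that the local fixed points glue together into the global mild solution with matching endpoint data. The reliance on $L^2$-norm conservation to prevent exponential-in-$M$ growth is the conceptual crux; everything else is the additivity of the $L^{q,\theta}$ norm across the partition, which is what produces the advertised $M^{1/\theta}$ factor.
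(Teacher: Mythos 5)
Your proposal is correct and follows essentially the same route as the paper's own proof: partition $[0,T]$ into $M$ slices on which $\|Q_v\| \le 1/2$, run the contraction fixed-point argument on each slice, and sum the $\theta$-th powers of the local $L^{q,\theta}$ bounds to obtain the factor $M^{1/\theta}$. The only difference is that you spell out two points the paper leaves implicit---that each slice must be seeded with the endpoint value $\psi([v],t_{j-1})$ of the previous slice, and that $L^2$-norm conservation by unitarity keeps every seed at $\|\psi_0\|_2$, preventing the per-slice constant from compounding to $\bigl(2(1+C_0)\bigr)^M$.
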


\begin{proof}
Firstly divide the time interval $[0,T]$ into a finite number $M$ of subintervals $I_1, \ldots, I_M$. Each subinterval be short enough such that $C_Q |I_m|^*\|v\|_{\V | I_m} \leq \onehalf$. Note that by this division also an infinite time interval $[0,\infty)$ gets feasible for potentials decaying fast enough such as in scattering processes. Now take the recursive formula \eqref{schroequ-mild-2} and define a map $\Phi : \X \rightarrow \X$
\[
\Phi(\psi) = U_0 \psi_0 + Q_v \psi.
\]
A fixed point of this map would be a solution to the mild Schrödinger equation. If we limit ourselves to any of the subintervals we have the following inequality by \eqref{free-evol-inequ} and the estimate for the operator norm of $Q_v$.
\[
\|\Phi(\psi)\|_{\X | I_m} \leq (1+C_0) \|\psi_0\|_2 + \onehalf \|\psi\|_{\X | I_m}.
\]
Now $\Phi$ clearly defines a contraction mapping and the unique fixed point $\psi = \Phi(\psi) \in \X|_{I_m}$ fulfils $\|\psi\|_{\X | I_m} \leq 2(1+C_0) \|\psi_0\|_2$. The final step is to concatenate all of these estimates to get one for the full time interval.
\begin{align*}
\|\psi\|_{q,\theta} &= \left( \sum_{m=1}^M \int_{I_m} \|\psi(t)\|_q^\theta \d t \right)^{1/\theta} \\
&\leq \left( \sum_{m=1}^M \|\psi\|_{\X | I_m}^\theta \right)^{1/\theta} \\
&\leq \left( \sum_{m=1}^M (2(1+C_0) \|\psi_0\|_2)^\theta \right)^{1/\theta} \\[0.5em]
&\leq 2 M^{1/\theta} (1+C_0) \|\psi_0\|_2
\end{align*}
\end{proof}

\section{Dynamics with Coulombic potentials}\label{sect-coulomb}

The study of Yajima \cite{yajima} partly revisited above holds for arbitrary spatial dimension $n$ thus in principle allowing multiple particles in three-dimensional space. Contrary to investigations on general evolution equations \cite{phillips,kato1953,pazy} it concentrates on the Schrödinger case and is \q{taking the characteristic features of Schrödinger equations into account [to] establish a theorem [...] for a larger class of potentials than in existing abstract theories.} The most significant such feature is the availability of Strichartz-type estimates. But it is important to mention that the Coulombic case for more than one particle is still ruled out. \\

Take a radial singular potential $v$ with its centre at the origin and $v = v_1 + v_2$ with $v_1 \in L^p, v_2 \in L^\infty$ like demanded in Definition \ref{def-V} for almost all times. We can always assume the support of $v_1$ confined in a ball $r=|x| \leq 1$ because the outer part is bounded and thus in $L^\infty$. The $L^p$ condition now reads in spherical coordinates
\[
\int_0^1 |v_1(r)|^p \, r^{n-1} \d r < \infty.
\]

A singular potential of type $v_1(r) = -r^{-s}$ must therefore fulfil $-p s + n - 1 > -1$ for a converging norm integral which is the same as $s < \frac{n}{p}$ thus $s < 2$ by Yajima's assumption on the potential space for $n \geq 3$. But such a potential is not of Coulombic type if more than one quantum particle in three-dimensional space is considered. Remember that the general form for a centred Coulomb potential for $N$ particles would be
\begin{equation}\label{coulomb-v}
v(x_1,\ldots,x_N) = -\sum_{i=1}^N r_i^{-1}.
\end{equation}
with $r_i = |x_i|$. The $L^p$ condition thus reads for one of the most singular terms
\begin{align*}
\int_{[0,1]^N} r_1^{-p+2} \d r_1 r_2^2 \d r_2 \ldots r_N^2 \d r_N < \infty \\
\mtext{i.e.} \int_0^1 r_1^{-p+2} \d r_1 < \infty,
\end{align*}
and we need $-p+2>-1$. Thus $\frac{n}{2} < p < 3$ which is not feasible for $n\geq 6$, the case of two or more particles. The problem arises even more drastically outside the centre region because of infinitely stretched singularities along all $\{r_i = 0\}$. Also the problem persists for singular interaction terms of the kind $v(x_1,x_2) = |x_1-x_2|^{-1}$ which describe the interaction of charged particles. In this approach this effectively rules out the Coulombic case for systems of more than one particle. 

\section{Fréchet differentiability of the wave function}

To derive an expression for the variational derivative $\delta\psi[v;w]$ consider the two Schrödinger equations
\begin{align}
 \i \partial_t \psi[v] &= H[v] \psi[v], \label{schro-equ-1} \\
 \i \partial_t \psi[v+w] &= H[v + w] \psi[v+w], \label{schro-equ-2}
\end{align}
both with the same initial value $\psi_0$. Assume $U[v]$ to be the evolution system corresponding to the Schrödinger equation \eqref{schroequ-mild} with potential $v$ and unitarily transform $\psi \mapsto \hat\psi$ to the $H[v]$-interaction picture by
\[
\psi(t) = U([v],t,0) \hat{\psi}(t).
\]
Putting this into \eqref{schro-equ-2} we get the Tomonaga-Schwinger equation and its integral, mild version as an analogue of \eqref{schroequ-mild}.
\begin{align}
&\i\partial_t \hat{\psi}[v+w] = \hat{w} \hat{\psi}[v+w] \nonumber\\
&\hat{w}(t) = U([v],0,t) w(t) U([v],t,0) \nonumber\\
\label{schroequ-mild-pertub}
&\hat{\psi}([v+w],t) = \psi_0 - \i\int_0^t \hat{w}(s) \hat{\psi}([v+w],s) \d s
\end{align}
Note that in the case of \eqref{schro-equ-1} with only potential $v$, that is $w=0$, this implies the identity $\hat{\psi}([v],t) = \psi_0$. We thus have from \eqref{schroequ-mild-pertub} and by proceeding recursively
\begin{align}\label{variation-recursion}
\begin{split}
&\hat\psi([v+w],t)-\hat{\psi}([v],t) \\
&= - \i \int_0^t \hat{w}(s) \, \hat\psi([v+w],s) \d s\\
&= - \i \int_0^t \hat{w}(s) \left( \psi_0 - \i \int_0^s \hat{w}(s') \hat\psi([v+w],s') \d s' \right) \!\d s.
\end{split}
\end{align}
With this expression it is easy to take the corresponding Gâteaux limit (shown to converge in the proof of Theorem \ref{Psi-frechet-prepare}) to get a first order approximation.
\begin{align}\label{gateaux-deriv-int}
\begin{split}
\delta \hat\psi([v;w],t) &= \lim_{\lambda \rightarrow 0} \frac{1}{\lambda} \big(\hat\psi([v+\lambda w],t)-\hat{\psi}([v],t)\big) \\
&=- \i \int_0^t \hat{w}(s) \psi_0 \d s
\end{split}
\end{align}
Transformed back to the Schrödinger picture we have
\begin{equation}
\delta\psi([v;w],t) = - \i \int_0^t U([v],t,s) \, w(s) \, \psi([v],s) \d s,
\end{equation}
or as the variation of the evolution operator acting on $\psi_0$ equivalently
\begin{equation}\label{U-gateaux-deriv}
\delta U([v;w],t,0) = - \i \int_0^t U([v],t,s) \, w(s) \, U([v],s,0) \d s.
\end{equation}
This variation within $\V$ is now actually a Fréchet derivative on a bounded set of potentials.

\begin{theorem}\label{Psi-frechet}
For arbitrary albeit finite $T>0$ and initial state $\psi_0 \in \H$ the unique solution to the mild Schrödinger equation is Fréchet-differentiable on $\U \subset \V$ bounded and open, i.e., $\psi \in \Cont^1(\U,\X)$. The following estimate holds for all $v,w \in \V$.
\[
\|\delta\psi[v;w]\|_{2,\infty} \leq (1+C_v)^2 T^* \|w\|_\V \|\psi_0\|_2
\]
\end{theorem}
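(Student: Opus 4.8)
The plan is to take the G\^ateaux-derivative formula \eqref{U-gateaux-deriv} as the starting point, to first exhibit $w\mapsto\delta\psi[v;w]$ as a bounded linear operator $\V\to\X$, then to establish the displayed bound, and finally to upgrade the G\^ateaux to a Fr\'echet derivative through a continuity argument in $v$. The first step I would carry out by differentiating the mild equation \eqref{schroequ-mild-2} in the perturbation: since $Q_{v+\lambda w}=Q_v+\lambda Q_w$ is affine in the potential, the derivative ought to solve the linear fixed-point relation $(\id-Q_v)\delta\psi[v;w]=Q_w\psi[v]$, equivalently $\delta\psi[v;w]=(\id-Q_v)^{-1}Q_w\psi[v]$, which by Duhamel's principle is exactly \eqref{U-gateaux-deriv}. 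Linearity in $w$ is then immediate from linearity of $w\mapsto Q_w$, and boundedness $\V\to\X$ would follow from $\|Q_w\|\leq C_Q T^*\|w\|_\V$ together with the uniform boundedness of $(\id-Q_v)^{-1}$ on $\X$; the latter I would obtain from the subinterval decomposition used in Theorem \ref{Cv-strichartz}, where $\|Q_v\|_{\X|I_m}\leq\onehalf$ bounds each local inverse by $2$ and the $M$ pieces patch together.

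The core is the stated estimate, and here the plan is as follows. Applying \eqref{U-gateaux-deriv} to $\psi_0$, using the composition law $U([v],t,s)=U([v],t,0)U([v],0,s)$ and the unitarity of the evolution system on $\H$, I would reduce the claim at fixed $t$, for $g=w\psi[v]$, to the retarded estimate
\[
\Big\|\int_0^t U([v],0,s)\,g(s)\,\d s\Big\|_2\leq(1+C_v)\|g\|_{\X'}.
\]
To prove this I would split $g=g_1+g_2$ with $g_1\in L^{2,1}$ and $g_2\in L^{q',\theta'}$ as in the norm of $\X'$. The $L^{2,1}$ piece is controlled directly by unitarity and Minkowski's integral inequality, contributing the constant $1$. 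The $L^{q',\theta'}$ piece is the dual Strichartz estimate for the interacting propagator, which I would derive by pairing the integral with an arbitrary $\psi_0\in\H$, shifting the evolution onto $\psi_0$, and then applying the H\"older inequality for the space-time pairing $\llangle\cdot,\cdot\rrangle$ together with Theorem \ref{Cv-strichartz}; this contributes the constant $C_v$. Taking the infimum over decompositions yields the factor $1+C_v$. Finally Lemma \ref{lemma-mult-op} gives $\|w\psi[v]\|_{\X'}\leq T^*\|w\|_\V\|\psi[v]\|_\X$, and since $\|\psi[v]\|_{2,\infty}=\|\psi_0\|_2$ by unitarity while $\|\psi[v]\|_{q,\theta}\leq C_v\|\psi_0\|_2$ by Theorem \ref{Cv-strichartz}, we have $\|\psi[v]\|_\X\leq(1+C_v)\|\psi_0\|_2$. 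Multiplying the three factors produces exactly $(1+C_v)^2 T^*\|w\|_\V\|\psi_0\|_2$.

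The last step is to pass from the G\^ateaux to the Fr\'echet derivative, i.e.\ to show that $v\mapsto\delta\psi[v;\cdot]$ is continuous $\U\to\mathcal B(\V,\X)$. I would subtract $\delta\psi[v';w]=(\id-Q_{v'})^{-1}Q_w\psi[v']$ from $\delta\psi[v;w]$ and expand the difference using the resolvent identity $(\id-Q_v)^{-1}-(\id-Q_{v'})^{-1}=(\id-Q_v)^{-1}(Q_v-Q_{v'})(\id-Q_{v'})^{-1}$ and the continuity of $v\mapsto\psi[v]$ in $\X$; since $\|Q_v-Q_{v'}\|\leq C_Q T^*\|v-v'\|_\V$, each factor is then either uniformly bounded or of order $\|v-v'\|_\V$, uniformly over $\|w\|_\V\leq1$, giving $\|\delta\psi[v;\cdot]-\delta\psi[v';\cdot]\|_{\mathcal B(\V,\X)}\to0$ and hence $\psi\in\Cont^1(\U,\X)$. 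The hard part will be exactly here, in securing the \emph{uniform} invertibility of $\id-Q_v$ over all of $\U$: this is where openness and boundedness of $\U$ are indispensable, since only for a bounded set can the number $M$ of subintervals and the resulting bound on $(\id-Q_v)^{-1}$ be chosen uniformly in $v$. A secondary and more routine point, handled once these uniform bounds are in place, is to verify that the limit \eqref{gateaux-deriv-int} genuinely exists in the $\X$-norm, i.e.\ that the second-order remainder in \eqref{variation-recursion} is $O(\lambda)$.
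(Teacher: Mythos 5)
Your proposal is correct and arrives at the exact constant $(1+C_v)^2T^*$, using the same raw ingredients as the paper (Lemma \ref{lemma-mult-op}, Theorem \ref{Cv-strichartz} applied twice, unitarity of $U([v],t,s)$, the resolvent identity for continuity of $v\mapsto\delta\psi[v;\cdot]$, and the G\^ateaux-plus-continuity criterion for Fr\'echet differentiability), but it is organized genuinely differently in two places. For the extension to arbitrary finite $T$, the paper never works with a single global inverse: it proves differentiability on short intervals (Theorem \ref{Psi-frechet-prepare}), writes $U[v]$ as a composition of subinterval evolutions, and applies a product rule for $\delta U$, checking that the quadratic $\delta U\,\delta U$ terms are $\lilo(\|w\|_\V)$. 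You instead keep one global resolvent $R_v=(\id-Q_v)^{-1}$ on $[0,T]$ and bound it uniformly on $\U$ by patching local inverses; this works, but the patching is more than ``each local inverse is bounded by $2$'': on a later subinterval $I_m$ the operator $Q_v$ carries a history integral over $[0,(m-1)\tau]$, which you must rewrite as the free evolution of the fixed vector $(Q_v\psi)((m-1)\tau)\in\H$ and control via \eqref{free-evol-inequ}, so the global bound on $R_v$ grows with the number $M$ of subintervals yet stays finite and uniform over the bounded set $\U$. This is precisely the continuation mechanism inside the proof of Theorem \ref{Cv-strichartz}, so the gap you flagged as ``the hard part'' is fillable, but it is the one step you should write out in full. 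For the estimate, your derivation is essentially dual to the paper's: the paper passes to the interaction picture, bounds $\|\delta\hat\psi[v;w]\|_{2,\infty}\le\|\hat w\psi_0\|_{2,1}$ by Minkowski, saturates the $L^{2,\infty}$--$L^{2,1}$ duality with a trajectory $\varphi$, and then must estimate $\|U[v]\varphi\|_\X$ for a time-dependent $\varphi$ (the $\sup_s$ device in \eqref{estimate-strichartz-3}); you instead prove the retarded inequality $\bigl\|\int_0^t U([v],0,s)\,g(s)\,\d s\bigr\|_2\le(1+C_v)\|g\|_{\X'}$ by pairing against fixed vectors of $\H$ and splitting $g$ according to the $\X'$-norm, which avoids evolving a trajectory altogether and is arguably cleaner. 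In both routes one factor $(1+C_v)$ comes from this duality step and the other from $\|w\psi[v]\|_{\X'}\le T^*\|w\|_\V(1+C_v)\|\psi_0\|_2$, so the final bounds coincide.
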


Here we used the Definitions \ref{def-X} and \ref{def-V} of the associated Banach spaces and the constant $C_v$ from Theorem \ref{Cv-strichartz}. The proof of this theorem is given in detail in the next section. Consequences of this result are discussed in Sections \ref{FrechetBounded} and \ref{FrechetDensity}.

\section{Proof of Theorem \ref{Psi-frechet}}

In the following we first show Fréchet differentiability of $\psi[v]$ for a small time interval. Then we extend the result to arbitrarily large albeit finite times, before we finally deduce the inequality of Theorem \ref{Psi-frechet}.

\begin{theorem}\label{Psi-frechet-prepare}
Let $\psi_0 \in \H$, $\U \subset \V$ bounded and open and $T>0$ short enough such that $C_Q T^* \|v\|_\V < 1$ for all $v \in \U$ then the unique solution to the mild Schrödinger equation is Fréchet-differentiable on $\U$, i.e. $\psi \in \Cont^1(\U,\X)$. Likewise we have the variation of the evolution operator $\delta U : \U \times \V \rightarrow \mathcal{B}(\H, \X)$.
\end{theorem}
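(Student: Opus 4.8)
The plan is to recast the solution map in the resolvent form supplied by the Neumann series \eqref{neumann-series} and thereby reduce Fréchet differentiability to the smoothness of operator inversion. I would regard $\psi[v]=(\id-Q_v)^{-1}U_0\psi_0$ as a composition of three maps: the assignment $v\mapsto Q_v$, the inversion $A\mapsto(\id-A)^{-1}$ on the open unit ball of $\mathcal{B}(\X)$, and evaluation at the fixed vector $U_0\psi_0\in\X$. Since $Q_v$ is defined by integrating $v$ against a fixed kernel in \eqref{def-Qv}, the map $v\mapsto Q_v$ is linear, and by Lemma \ref{lemma-mult-op} together with the bound $\|Q_v\|\le C_QT^*\|v\|_\V$ it is bounded from $\V$ into $\mathcal{B}(\X)$; hence it is trivially Fréchet-differentiable with constant derivative $w\mapsto Q_w$. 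Because $\U$ is bounded and $T$ is chosen so small that $C_QT^*\|v\|_\V<1$ on $\U$, I obtain a \emph{uniform} bound $\kappa:=\sup_{v\in\U}\|Q_v\|<1$, so that $Q_v$ stays inside the open unit ball and $\|(\id-Q_v)^{-1}\|\le(1-\kappa)^{-1}$ for all $v\in\U$.

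I would then invoke the standard fact that inversion is analytic on $\{A:\|A\|<1\}$, with Fréchet derivative at $A$ in direction $B$ equal to $(\id-A)^{-1}B(\id-A)^{-1}$; this follows from the factorisation $\id-A-B=(\id-A)\big(\id-(\id-A)^{-1}B\big)$ and expanding the second factor in its Neumann series for small $B$. Composing by the chain rule with the bounded linear evaluation $L\mapsto L\,U_0\psi_0$ yields that $v\mapsto\psi[v]$ is Fréchet-differentiable on $\U$ with
\[
\delta\psi[v;w]=(\id-Q_v)^{-1}Q_w(\id-Q_v)^{-1}U_0\psi_0=(\id-Q_v)^{-1}Q_w\,\psi[v],
\]
which one checks agrees with the interaction-picture expression \eqref{U-gateaux-deriv} once the $Q_v$-corrections are resummed into the dressed evolution system $U[v]$. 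Linearity and boundedness of $w\mapsto\delta\psi[v;w]$ are immediate from $\|Q_w\|\le C_QT^*\|w\|_\V$ and the uniform resolvent bound, so $\delta\psi[v;\cdot]\in\mathcal{B}(\V,\X)$.

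To upgrade this to the Fréchet property in the sense demanded in the Introduction, I would verify that $v\mapsto\delta\psi[v;\cdot]$ is continuous as a map $\U\to\mathcal{B}(\V,\X)$. Both $v$-dependent factors, $(\id-Q_v)^{-1}$ and $\psi[v]=(\id-Q_v)^{-1}U_0\psi_0$, are continuous in $v$ because inversion is continuous and $v\mapsto Q_v$ is; estimating the difference $\delta\psi[v_1;\cdot]-\delta\psi[v_2;\cdot]$ in operator norm and dividing out the common factor $\|w\|_\V$ then gives continuity, with the uniform bound used throughout. The very same composition, performed with $\psi_0$ left free, produces the variation of the evolution operator $\delta U[v;w]=(\id-Q_v)^{-1}Q_w(\id-Q_v)^{-1}U_0$, a well-defined element of $\mathcal{B}(\H,\X)$ since $U_0\in\mathcal{B}(\H,\X)$ by \eqref{free-evol-inequ}. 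The main obstacle I anticipate is not the chain-rule bookkeeping but securing this \emph{uniform-in-$v$} control: one must exploit the boundedness of $\U$ and the smallness of $T$ to keep $\|Q_v\|$ bounded away from $1$, since it is precisely this gap that makes the derivative formula and its continuity valid on all of $\U$ rather than merely pointwise.
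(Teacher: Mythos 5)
Your proposal is correct and arrives at exactly the paper's derivative formula --- in the paper's shorthand $R_v=(\id-Q_v)^{-1}$, both give $\delta\psi[v;w]=R_vQ_wR_vU_0\psi_0$ --- but the logical packaging differs. The paper works in situ with the resolvent identity $R_{v+w}=R_v(\id+Q_wR_{v+w})$: it expands $\psi[v+\varepsilon w]-\psi[v]=\sum_{k\geq 1}(R_vQ_{\varepsilon w})^kR_vU_0\psi_0$, reads off the G\^ateaux limit, and then upgrades G\^ateaux to Fr\'echet by showing $v\mapsto\delta\psi[v;\cdot]$ is continuous --- via the double series $\sum_{(j,k)\neq(0,0)}(R_vQ_h)^jR_vQ_w(R_vQ_h)^kR_vU_0\psi_0$, where every term carries a factor $Q_h$ --- and invoking a variational-calculus lemma (Lemma 30.4.2 in Blanchard-Br\"uning) that G\^ateaux differentiability plus a continuous derivative implies Fr\'echet differentiability. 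You instead factor the solution map as a composition of the bounded linear map $v\mapsto Q_v$, the inversion $A\mapsto(\id-A)^{-1}$ on the open unit ball, and evaluation at $U_0\psi_0$, then apply the chain rule; since the derivative of inversion is the same Neumann-series computation the paper does by hand, the two arguments have identical analytic content, but yours delivers Fr\'echet differentiability (indeed analyticity) at each point directly, needing the continuity check only for the $\Cont^1$ conclusion, whereas the paper needs the G\^ateaux-to-Fr\'echet lemma but keeps every estimate elementary and self-contained. One small caution on your uniformity claim: the hypothesis literally gives $C_QT^*\|v\|_\V<1$ pointwise on $\U$, and $\kappa=\sup_{v\in\U}\|Q_v\|<1$ follows only under the intended reading that $T$ is chosen against $\sup_{v\in\U}\|v\|_\V$ (which is how the paper uses boundedness of $\U$ in the proof of its main theorem); this is harmless in any case, because differentiability and continuity of the derivative at a point are local statements, so pointwise control $\|Q_v\|<1$ already suffices for your argument.
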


\begin{proof}
We use the shorthand notation $R_v = (\id - Q_v)^{-1}$ as this operator is closely related to the resolvent of $Q_v$. Because of the limitation to potentials $v \in \U$ we have convergence of the Neumann series in \eqref{neumann-series} which means boundedness of $R_v$. Due to $Q_{v+w} = Q_v+Q_w$ the resolvent identity
\[
R_{v+w} = R_v (\id + Q_w R_{v+w})
\]
holds. Thus inserting recursively we get from \eqref{neumann-series} the difference
\begin{align*}
\psi[v+\varepsilon w]-\psi[v] &= R_v Q_{\varepsilon w} R_{v+\varepsilon w}U_0\psi_0 \\
&= \sum_{k=1}^\infty (R_v Q_{\varepsilon w})^k R_v U_0\psi_0.
\end{align*}
This series coverges for fixed $v,w$ and small enough $\varepsilon$. We use again linearity $Q_{\varepsilon w} = \varepsilon Q_{w}$ for $\varepsilon \in \R$ and the Gâteaux limit follows immediately.
\begin{align}\label{Psi-frechet-expression}
\begin{split}
\delta \psi[v;w] &= \lim_{\varepsilon \rightarrow 0} \frac{1}{\varepsilon} (\psi[v+\varepsilon w]-\psi[v]) \\
&= R_v Q_w R_v U_0\psi_0
\end{split}
\end{align}
Continuity (and linearity) of the above form of $\delta \psi$ in its second argument is readily established by continuity (and linearity) of $Q_w$ in $w$. This proves Gâteaux differentiability. If we additionally show $v \mapsto \delta\psi[v,\cdot]$ continuous as a mapping $\U \rightarrow \mathcal{B}(\V,\X)$ then a lemma from variational calculus (see for example Lemma 30.4.2 in Blanchard-Brüning \cite{blanchard-bruening}) implies Fréchet differentiability. This is certainly true if $\lim_{h \rightarrow 0}\|\delta\psi[v+h;w] - \delta\psi[v;w]\|_\X = 0$ for all $w \in \V$. We show this by using expression \eqref{Psi-frechet-expression} for $\delta\psi$ and the resolvent identity once more.
\begin{align*}
&\delta\psi[v+h;w] - \delta\psi[v;w] \\[0.7em]
&= (R_{v+h} Q_w R_{v+h} - R_v Q_w R_v) U_0\psi_0 \\
&= \sum_{(j,k) \neq (0,0)}^\infty (R_v Q_h)^j R_v Q_w (R_v Q_h)^k R_v U_0\psi_0
\end{align*}
Again those sums will converge for small enough $h \in \V$ and the expression is well defined. As there is at least one $Q_h$ contained in every term and $\|Q_h\| \leq C_Q T^* \|h\|_\V$ the whole expression goes to $0$ as $h \rightarrow 0$. This makes $\psi : \V \rightarrow \X$ Fréchet differentiable on $\U$.
\end{proof}

Note particularly that if we want to widen the open ball $\U \subset \V$ with radius $R$ of allowed potentials this means the time bound $T$ limited by $T^* < (C_Q R)^{-1}$ gets smaller and vice versa. By dividing the time interval in sufficiently short subintervals with individual evolution operators we can circumvent this limitation as shown by the following proof of the first part of the main theorem.

\begin{proof}[Proof of Theorem \ref{Psi-frechet}, Fréchet differentiability]
We use the way $U[v]$ can be put together by expressions like in \eqref{neumann-series}, each one for a short enough time interval such that convergence is guaranteed. This means take $M \in \N$ large enough and define $\tau = T/M$ and $\tau^*$ like in \eqref{Tstar} such that $C_Q \tau^* \|v\|_\V < 1$ for all $v \in \U$ which is possible due to boundedness of $\U$. We thus have a partition into subintervals $I_1 = [0,\tau], I_2 = [\tau,2\tau], \ldots, I_M=[(M-1)\tau,T]$. Imagine for the time being $M=2$ is large enough, later we generalise this case. Now we have
\[
(U[v]\psi_0)(t) = \left\{
\begin{array}{ll}
 U([v], t, 0) \psi_0 & \mbox{for}\; t \in I_1 \\
 U([v], t, \tau) \, U([v], \tau, 0) \psi_0 & \mbox{for}\; t \in I_2.
\end{array} \right.
\]
The variations of the individual evolution operators are well defined, proven in Theorem \ref{Psi-frechet-prepare}, one just needs to shift the potentials accordingly in time to have the $Q_v$ and $Q_w$ operators acting correctly as the integrals therein always start at $t=0$. To determine $\delta U[v;w]$ we put in the expansion $U[v+w] \in U[v] + \delta U[v;w] + \lilo(\|w\|_\V)$ as $w \rightarrow 0$ for all evolutions.
\begin{align*}
(U[v+w]\psi_0)(t) \in \left\{
\begin{array}{ll}
 U([v], t, 0) \psi_0 + \delta U([v;w], t, 0) \psi_0 \\
 + \lilo(\|w\|_\V) \\
 \multicolumn{1}{r}{\mbox{for}\; t \in I_1} \\[0.5em]
  U([v], t, \tau) \, U([v], \tau, 0) \psi_0 \\
+ \, \delta U([v;w], t, \tau) \, U([v], \tau, 0) \psi_0 \\
+ \, U([v], t, \tau) \, \delta U([v;w], \tau, 0) \psi_0 \\
+ \, \delta U([v;w], t, \tau) \, \delta U([v;w], \tau, 0) \psi_0 \\
+ \, \lilo(\|w\|_\V) \\
 \multicolumn{1}{r}{\mbox{for}\; t \in I_2}
\end{array} \right.
\end{align*}
The quadratic $\delta U$ term is of order $\lilo(\|w\|_\V)$ as $w \rightarrow 0$ as well and can therefore be neglected in the whole $\delta U[v;w]$ expression. We show this with the boundedness of $\delta U$ in its second argument from Theorem \ref{Psi-frechet-prepare}, introducing a bound $C>0$. Further we employ the obvious estimate $\|\varphi(t)\|_2 \leq \|\varphi\|_{\X|I_m} \leq \|\varphi\|_\X$ for $t \in I_m$.
\begin{align*}
&\|\delta U([v;w], \cdot, \tau) \, \delta U([v;w], \tau, 0) \psi_0\|_{\X | I_2} \\
&\leq C \|w\|_\V \|\delta U([v;w], \tau, 0) \psi_0\|_2 \\
&\leq C  \|w\|_\V \|\delta U[v;w] \psi_0\|_{\X|I_1} \\
&\leq C^2 \|w\|_\V^2 \|\psi_0\|_2
\end{align*}
The extension to $M>2$ is straightforward and gives us the following product rule for $\delta \psi[v;w]$ at time $t \in I_m$.
\begin{align*}
&\delta \psi([v;w],t) = (\delta U[v;w]\psi_0)(t) \\
&= \delta U([v;w], t, (m-1)\tau) \ldots U([v], 2\tau,\tau) \, U([v], \tau,0) \psi_0 \\
&+ \cdots \\
&+ U([v], t, (m-1)\tau) \ldots \delta U([v;w], 2\tau,\tau) \, U([v], \tau,0) \psi_0 \\
&+ U([v], t, (m-1)\tau) \ldots U([v], 2\tau,\tau) \, \delta U([v;w], \tau,0) \psi_0
\end{align*}
The conditions of linearity and continuity needed for Fréchet differentiability can be directly transferred from Theorem \ref{Psi-frechet-prepare} as we add only finitely many terms.
\end{proof}

\begin{proof}[Proof of Theorem \ref{Psi-frechet}, estimate for functional variations of Schrödinger dynamics]
We start with the definition of the Fréchet derivative using the $H[v]$-interaction picture like in \eqref{gateaux-deriv-int} and by applying Minkowski's inequality. The transformation with the evolution operator $U([v],t,0)$ does not affect the $L^2$-norm, so we have $\|\delta\psi[v;w]\|_{2,\infty} = \|\delta\hat\psi[v;w]\|_{2,\infty}$.
\begin{align*}
\|\delta\psi[v;w]\|_{2,\infty} &= \sup_{t\in[0,T]} \left\| \int_0^t \hat{w}(s) \psi_0 \d s\right\|_2 \\
&\leq \int_0^T \|\hat{w}(s) \psi_0\|_2 \d s = \|\hat{w} \psi_0\|_{2,1}
\end{align*}
Next we apply the topological duality of $L^{2,\infty}-L^{2,1}$ with the time-space scalar product $\llangle\cdot,\cdot\rrangle$ to saturate the Hölder inequality with a special $\varphi \in L^{2,\infty} \subset \X$.
\begin{equation}\label{estimate-duality-1}
|\llangle\varphi,\hat w \psi_0\rrangle| = \|\varphi\|_{2,\infty} \cdot \|\hat w\psi_0\|_{2,1}
\end{equation}
Similarly we get by $\X-\X'$ duality and Hölder's inequality after substituting back the transformed $\hat w$ and moving one $U[v]$ to the left side of the scalar product
\begin{align}\label{estimate-duality-2}
\begin{split}
|\llangle\varphi,\hat w \psi_0\rrangle| &= |\llangle U[v]\varphi,w \psi[v]\rrangle | \\
&\leq \|U[v]\varphi\|_\X \cdot \|w\psi[v]\|_{\X'}.
\end{split}
\end{align}
Our aim will be to get an estimate for the r.h.s.~of \eqref{estimate-duality-2} which in return yields an inequality for $\|\delta\psi[v;w]\|_{2,\infty}$ over \eqref{estimate-duality-1}.
First we consider the term $\|U[v]\varphi\|_\X$ which has to be treated carefully, because it involves the time-dependent evolution of an also time-dependent trajectory, i.e. $t \mapsto U([v],t,0)\varphi(t)$. But we easily have $\|U([v],t,0)\varphi(t)\|_q \leq \sup_{s \in [0,T]} \|U([v],t,0)\varphi(s)\|_q$ and thus
\begin{align*}
\|U[v]\varphi\|_\X &= \|\varphi\|_{2,\infty} + \|U[v]\varphi\|_{q,\theta} \\
&\leq \|\varphi\|_{2,\infty} + \sup_{s \in [0,T]} \|U[v]\varphi(s)\|_{q,\theta}.
\end{align*}
The Strichartz estimate from Theorem \ref{Cv-strichartz} gives us $\|U[v]\varphi(s)\|_{q,\theta} \leq C_v \|\varphi(s)\|_2$ and we have in combination
\begin{align}\label{estimate-strichartz-3}
\begin{split}
\|U[v]\varphi\|_\X &\leq \|\varphi\|_{2,\infty} + C_v \sup\nolimits_{s \in [0,T]} \|\varphi(s)\|_2 \\
&= (1+C_v) \|\varphi\|_{2,\infty}.
\end{split}
\end{align}
The final term is $\|w\psi[v]\|_{\X'}$ from \eqref{estimate-duality-2} which is treated with Lemma \ref{lemma-mult-op} for estimating the action of the multiplication operator $w$ and then a second time with the Strichartz inequality from Theorem \ref{Cv-strichartz}.
\begin{align}\label{estimate-strichartz-4}
\begin{split}
\|w\psi[v]\|_{\X'} &\leq T^* \|w\|_\V \|\psi[v]\|_\X \\
&\leq T^* \|w\|_\V \cdot (1+C_v)\|\psi_0\|_2
\end{split}
\end{align}
We are now able to put \eqref{estimate-duality-1} and \eqref{estimate-duality-2} together with the estimates \eqref{estimate-strichartz-3} (where $\|\varphi\|_{2,\infty}$ cancels out) and \eqref{estimate-strichartz-4} above to state the inequality of the main theorem.
\end{proof}

\section{Fréchet differentiability and bounded observable quantities}

\label{FrechetBounded}

We now apply our main result and discuss consequences. We first provide an inequality that describes how strongly two solutions $\psi[v]$ and $\psi[v+w]$ differ. Since we have shown Fréchet differentiability of the wave function we can employ the fundamental theorem of calculus \cite[Cor.~30.1.1]{blanchard-bruening} and find
\begin{align*}
&\psi([v+w],t) - \psi([v],t) \\
&= - \i \int_{0}^{1} \left( \int_0^{t} U([v_{\lambda}],t,s) \, w(s) \, \psi([v_{\lambda}],s) \d s\right) \d \lambda,
\end{align*}
where $v_{\lambda}=v+\lambda w$. Using the inequality in Theorem \ref{Psi-frechet} we then find that
\begin{equation}\label{diff-trajectories}
 \|  \psi[v+w] - \psi[v] \|_{2,\infty} \leq C \, T^* \|w\|_\V \|\psi_0\|_2,
\end{equation}
where $C = \sup_{\lambda \in [0,1]}(1 + C_{v_\lambda})^2$. Hence if the external potentials and interactions measured in the norm of $\V$ only differ slightly, the resulting trajectories are very similar with respect to the norm of $\mathcal{C}^{0}([0,T], \H)$. Consequently the Schr\"odinger dynamics is stable with respect to small changes or inaccuracies in the external potentials and interactions. The same applies to bounded operators. For this, consider the expectation value of a time-independent, self-adjoint, bounded operator $A : \H \rightarrow \H$ for a fixed initial state $\psi_0$ at time $t \in [0,T]$,
\[
\langle A \rangle_{[v]}(t) = \langle \psi([v],t) ,A \psi([v],t) \rangle.
\]

Using the product rule for functional variations of potentials and switching to the $H[v]$-interaction picture once more we get the following from \eqref{gateaux-deriv-int} and $\hat\psi([v],t) = \psi_0$. (Note: The scalar product is antilinear in the first component; ``\textit{c.c.}" stands for the complex conjugate of the whole expression.)
\begin{align}\label{var-A}
\begin{split}
\delta \langle A \rangle_{[v;w]}(t) &= \langle \delta \psi([v;w],t), A \psi([v],t) \rangle + c.c. \\[0.5em]
&= \langle \delta \hat\psi([v;w],t), \hat A(t) \hat\psi([v],t) \rangle + c.c. \\
&= \i \int_0^t \langle \hat w(s) \psi_0, \hat A(t) \psi_0 \rangle \d s + c.c. \\
&= \i\int_0^t \langle [\hat{w}(s), \hat{A}(t)] \rangle_0 \d s 
\end{split}
\end{align}
This is exactly the Kubo formula of first order perturbations central to linear-response theory. Note especially that $\hat A(t)$ gets time-dependent because of the $H[v]$-interaction picture transformation with $U([v],t,0)$.

\section{Fréchet differentiability and time-dependent density-functional theory}

\label{FrechetDensity}

Another important quantity though not a self-adjoint operator is the one-particle density. We adopt the notation $x=x_1, \bar{x} = (x_2,\ldots,x_N)$. For spatially (anti-)symmetric trajectories $\psi[v] \in \Cont^0([0,T], \H) \supset \X$ the density is defined as
\[
n([v],t,x) = N \smashoperator{\int\limits_{\R^{3(N-1)}}} \d \bar{x}\, |\psi([v],t,x,\bar{x})|^2.
\]
Within our framework it is now natural to ask for the Fréchet derivative $\delta n[v;w]$. Like in \eqref{var-A} we get
\begin{align*}
&\delta n([v;w],t,x) \\
&= N \smashoperator{\int\limits_{\R^{3(N-1)}}} \d \bar{x}\, \overline\psi([v],t,x,\bar{x}) \delta\psi([v;w],t,x,\bar{x}) + c.c.
\end{align*}
An estimate can now easily be established with Theorem \ref{Psi-frechet}.
\begin{align*}
\sup_{t \in [0,T]}& \|\delta n([v;w],t)\|_1 = \sup_{t \in [0,T]} \int_{\R^3} \d x \left| \delta n([v;w],t,x) \right| \\
&\leq 2 N \sup_{t \in [0,T]} \langle |\psi([v],t)|, |\delta\psi([v;w],t)| \rangle \\
&\leq 2 N \sup_{t \in [0,T]} \|\psi([v],t)\|_2 \cdot \|\delta\psi([v;w],t)\|_2 \\
&\leq 2 N (1+C_{v})^2 T^* \|w\|_\V \cdot \|\psi_0\|_2^2
\end{align*}

To make the connection to physics and standard density-response theory \cite{stefanucci-vanleeuwen} more explicit we further restrict ourselves to (anti-)symmetric trajectories associated with spatially symmetric $v \in \V$ and consider only symmetric (one-body) perturbations of the form $\sum_{k=1}^{N}w(t,x_k)  \in \V $. Furthermore we adopt the usual tacit assumption that the unitary evolution operator $U([v],t,s)$ can be represented by an integral transformation with an integral kernel (the so-called propagator) of the form $U([v],t,x,\bar{x},s,y,\bar{y})$. Then the above functional derivative can be rewritten as
\[
 \delta n([v;w],t,x) = \int_{0}^{t} \d s \int_{\mathbb{R}^3} \d y \, \chi([v],t,x,s,y) \, w(s,y),
\]
where the non-equilibrium linear-response kernel is defined by 
\begin{align*}
 \chi([v],t,x,s,y) &= -\i N^2 \smashoperator{\int\limits_{\R^{6(N-1)}}} \d \bar{x}  \d \bar{y}\, \overline{\psi}([v],t,x,\bar{x})\cdot \\
 &\cdot U([v],t,x,\bar{x},s,y,\bar{y}) \psi([v],s,y,\bar{y}) + c.c.
\end{align*}

Finally, we discuss these results in the context of TDDFT. In this many-body theory the basic statements are concerned with the bijectivity of the mapping $v \mapsto n$ for a fixed initial state and given two-body interaction \cite{TDDFT, TDDFT2}. The allowed potentials take the symmetric (one-body) form $v \equiv \sum_{k=1}^{N}v(t,x_k)$ and potentials that only differ by a spatially constant function $c(t)$ are considered equivalent since their action only amounts to a change of gauge and does not influence the dynamics of the quantum system.
Thus the set of allowed potentials consists of equivalence classes of one-body potentials $v$. The standard approach to show bijectivity of such a mapping $v \mapsto n$ is due to Runge and Gross \cite{runge}, which restricts the set of potentials further to those that are Taylor-expandable in time. Several approaches have been developed in the recent years that try to overcome this restriction \cite{vanleeuwen2, ruggenthaler1, ruggenthaler3, tokatly, farzanehpour}. For example the local invertibility of the mapping $v \mapsto n$ is considered for Laplace-transformable potentials that perturb a (time-independent) many-body system in its ground state\cite{vanleeuwen2}. The main ingredient in that approach is the linear-response kernel in Lehmann representation. For explicitly time-dependent problems, however, the Lehmann representation is no longer valid, and thus such local invertibility considerations need to be based on non-equilibrium density-response theory. The current result thus sets the stage for similar considerations based on the inverse-function theorem for Banach spaces (which needs Fréchet differentiability) also in the case of time-dependent systems. Additionally a fixed-point approach to TDDFT was developed\cite{ruggenthaler1, ruggenthaler2}, where one employs the Fréchet differentiability of the divergence of the internal local-force density $q[v]$ which can be formally defined by \cite{TDDFT}
\[
 q([v],t,x) = \partial_t^2 n([v], t,x) - \nabla \cdot\left[ n([v],t,x) \nabla v(t,x) \right]. 
\]

One immediately sees that $q[v]$ is only well-defined if the density $n[v]$ and the potential $v$ obey certain differentiability conditions in space and time. Alternatively one can define $q[v]$ directly in terms of the wave function (see Ullrich \cite[3.49]{TDDFT}), from which we see that a sufficient condition for the existence of $q[v]$ is that the wave function is four-times differentiable in space. Either way, we would need to impose further regularity conditions on the solutions of the TDSE and hence consider Fréchet differentiability with respect to stronger norms on the space of potentials and trajectories.
At the moment strict though superfluously strong regularity conditions for solutions to the TDSE are known for very specific situations such as periodic systems with infinitely-differentiable potentials with respect to space and time \cite{delort}. Hence, while we cannot yet give a full proof for the differentiability of $q[v]$, our results show that the assumption of Fréchet differentiability is mathematically reasonable, and provide a further step towards a rigorous fixed-point formulation of TDDFT. For further details on rigorous results in TDDFT and their connection with Fréchet differentiability we refer to a recent review article \cite{review}.

\section{Conclusion and outlook}

In this work we have shown Fréchet differentiability of the many-body wave function with respect to time-dependent potentials. This implies that the difference of two solutions of the TDSE that start from the same initial state is bounded by the difference of the external potentials and interactions. Hence the dynamics of multi-particle systems is not very sensitive to small perturbations in the external fields or interactions. The Fréchet differentiability can be directly extended to bounded operators and the one-particle density, which leads to the non-equilibrium version of the Kubo formula and non-equilibrium density-response theory. This is important since the usual Lehmann representation of the linear-response kernel is not valid in explicitly time-dependent systems. Further, the current result sets the stage for local-inversion investigations of the mapping $v \mapsto n$ for explicitly time-dependent problems and shows that the assumption of a Fréchet differentiable internal-force density $q[v]$ in the fixed-point formulation is reasonable. However, a rigorous proof for the differentiability of $q[v]$ is still open.

Since the current approach effectively rules out Coulombic potentials for more than one particle in three dimensions it would be desirable to extend the results of this work to the approach of Reed and Simon \cite[Th.~X.71]{reed-simon} which can also treat this important case. After all, in the quantum-mechanical modelling of many-body systems usually Coulombic potentials are employed. However, we expect that the wave functions stay Fréchet differentiable in such situations as well, and hence the dynamics of many-particle systems in three dimensions do not depend drastically on whether one uses the exact or an approximate Coulombic potential. Further, to rigorously investigate the functional differentiability of $q[v]$ explicit conditions on the potentials to guarantee four-times differentiable wave functions need to be devised. These two related questions will be the subject of future work and will allow us to further strengthen the foundations of TDDFT.

\begin{acknowledgements}
Fruitful discussions regarding this work were held with Robert van Leeuwen, University of Jyväskylä and Klaas Giesbertz, VU University Amsterdam. Special thanks go to Neepa T. Maitra and her group for the great hospitality we experienced during our stay at Hunter College, New York City University. M.R. acknowledges financial support by the Austrian Science Fund (FWF projects J 3016-N16 and P 25739-N27).
\end{acknowledgements}


\begin{thebibliography}{99}

\bibitem{greiner} W.~Greiner,and J.~Reinhardt, \textit{Field quantization} (Springer, 1996)

\bibitem{breit} G.~Breit, Phys. Rev. \textbf{39}, 616 (1932)

\bibitem{GF} A.L.\ Fetter and J.D.\ Walecka, \textit{Quantum Theory of Many-Particle Systems} (Dover Publications, 2003)

\bibitem{stefanucci-vanleeuwen} G.~Stefanucci and R.~van Leeuwen, \textit{Nonequilibrium Many-Body Theory of Quantum Systems} (Cambridge University Press, 2013)

\bibitem{DFT} R.M.~Dreizler and E.K.U.~Gross, \textit{Density Functional Theory - An Approach to the Quantum Many-Body Problem} (Springer-Verlag, 1990)

\bibitem{DFT2} E.~Engel and R.M.~Dreizler, \textit{Density Functional Theory - An Advanced Course} (Springer-Verlag, 2011)

\bibitem{lieb} E.H.~Lieb, Int. J. Quantum Chem. \textbf{24}, 243 (1983)

\bibitem{chayes}  J.~Chayes, L.~Chayes, and M.B.~Ruskai, J. Stat. Phys. \textbf{38}, 497 (1985)

\bibitem{eschrig} H.~Eschrig, \textit{The Fundamentals of Density Functional Theory} (Teubner, 1996)

\bibitem{lammert1} P.E.~Lammert, Int. J. Quantum Chem. \textbf{107}, 1943 (2007)

\bibitem{lammert2} P.E.~Lammert, Phys. Rev. A \textbf{82}, 012109 (2010)

\bibitem{kvaal} S.~Kvaal, U.~Ekström, A.M.~Teale, and T.~Helgaker, J. Chem. Phys. \textbf{140}, 18A518 (2014)

\bibitem{TDDFT} C.A.~Ullrich, {\em Time-Dependent Density-Functional Theory} (Oxford University Press, 2012)

\bibitem{TDDFT2} M.A.~Marques, N.T.~Maitra, F.M.~Nogueira, E.K.U.~Gross, and A.~Rubio, \textit{Fundamentals of time-dependent density functional theory} (Springer, 2012)

\bibitem{runge} E.~Runge and E.K.U.~Gross, Phys. Rev. Lett. {\bf 52}, 997 (1984)

\bibitem{vanleeuwen} R.~van Leeuwen, Phys. Rev. Lett.  {\bf 82}, 3863 (1999)

\bibitem{review} M.~Ruggenthaler, M.~Penz, and R.~van Leeuwen, preprint arXiv:1412.7052 (2014), to appear in J.~Phys.~Condens.~Matter

\bibitem{vanleeuwen2} R.~van Leeuwen, Int. J. Mod. Phys. B {\bf 15}, 1969 (2001)

\bibitem{ruggenthaler1} M.~Ruggenthaler and R.~van Leeuwen, Europhys. Lett. {\bf 95}, 13001 (2011)

\bibitem{ruggenthaler2} M.~Ruggenthaler, K.J.H.~Giesbertz, M.~Penz, and R.~van Leeuwen, Phys. Rev. A \textbf{85}, 052504 (2012)

\bibitem{mcweeny} R.~McWeeny, \textit{Methods of Molecular Quantum Mechanics} (Academic Press, 1992)

\bibitem{wilson} S.~Wilson, \textit{Electron Correlation in Molecules} (Dover Publications, 1984)

\bibitem{yajima} K.~Yajima, \textit{Existence of Solutions for Schrödinger Evolution Equations}, Commun. Math. Phys. \textbf{110}, 415-26 (1987)

\bibitem{dancona} P.~D'Ancona, V.~Pierfelice, and N.~Visciglia, \textit{Some remarks on the Schrödinger equation with a potential in $L^r_t L^s_x$}, Math. Ann. \textbf{333} 271-90 (2005)

\bibitem{strichartz} R.~S.~Strichartz, \textit{Restrictions of Fourier transforms to quadratic surfaces and decay of solutions of wave equations}, Duke Math J. \textbf{44} 705-14 (1977)

\bibitem{ginibre-velo} J.~Ginibre and G.~Velo, \textit{The global Cauchy problem for non-linear Schrödinger equation revisited}, Ann. Inst. H. Poincaré, Anal. Nonlinéare \textbf{2} 309-27 (1985)

\bibitem{yajima2} K.~Yajima, \textit{On smoothing property of Schrödinger propagators}, in \textit{Functional-Analytic Methods for Partial Differential Equations}, Lecture Notes in Mathematics Volume 1450 (Springer, 1990)

\bibitem{phillips} R.~S.~Phillips, \textit{Perturbation theory for semi-groups of linear operators}, Trans. Amer. Math. Soc. \textbf{74} 199-221 (1953)

\bibitem{kato1953} T.~Kato, \textit{Integration of the equation of evolution in a Banach space}, J. Math. Soc. Japan \textbf{5} 208-34 (1953)

\bibitem{pazy} A.~Pazy, \textit{Semigroups of Linear Operators and Applications to Partial Differential Equations} (Springer, 1983)

\bibitem{blanchard-bruening} Ph.~Blanchard and E.~Brüning, \textit{Mathematical Methods in Physics: Distributions, Hilbert Space Operators, and Variational Methods} (Birkhäuser, 2003)

\bibitem{ruggenthaler3}   M.~Ruggenthaler, M.~Penz, and D.~Bauer,  Phys. Rev. A {\bf 81} 062108 (2010)

\bibitem{tokatly} I.V.~Tokatly ,  Phys. Rev. B {\bf 83} 035127 (2011)

\bibitem{farzanehpour} M.~Farzanehpour and I.V.~Tokatly, Phys. Rev. B {\bf 86}, 125130 (2012)

\bibitem{delort}  J.M.~Delort , Internat. Math. Res. Notices \textbf{2010}, 2305 (2010) 

\bibitem{reed-simon}  M.~Reed, and B.~Simon, \textit{Methods of Modern Mathematical Physics, Vol.~2.: Fourier Analysis, Self-Adjointness}
(Academic Press, 1975)

\end{thebibliography}
\end{document}